\algrenewcommand\algorithmicrequire{\textbf{Input:}}
\algrenewcommand\algorithmicensure{\textbf{Output:}}
\algnewcommand{\algorithmicand}{\textbf{ and }}
\algnewcommand{\algorithmicor}{\textbf{ or }}
\algnewcommand{\OR}{\algorithmicor}
\algnewcommand{\AND}{\algorithmicand}
\newtheorem{theorem}{Theorem}[section]
\newtheorem{definition}[theorem]{Definition}
\newcommand{\ProblemA}{(1-SPNI)}
\newcommand{\Problem}{(2-SPNI)}
\journal{Journal of ...}
\begin{document}
	
	\begin{frontmatter}
		
		\title{The two player shortest path network interdiction problem}

		%% Group authors per affiliation:
		\author[a]{Simon Busam}
		\ead{simon.busam@web.de}
		\author[a]{Luca E. Sch\"afer\corref{mycorrespondingauthor}}
		\cortext[mycorrespondingauthor]{Corresponding author}
		\ead{luca.schaefer@mathematik.uni-kl.de}
		
		\author[a]{Stefan Ruzika}
		\ead{ruzika@mathematik.uni-kl.de}
		
		\address[a]{Department of Mathematics, Technische Universit\"at Kaiserslautern, 67663 Kaiserslautern, Germany}
		
		\begin{abstract}
			In this article, we study a biobjective extension of the shortest path network interdiction problem.  Each arc in the network is associated with two integer length values and two players compute their  respective shortest paths from source to sink independently from each other while an interdictor tries to lengthen both shortest paths by removing arcs. We show that this problem is intractable and that deciding whether a feasible interdiction strategy is efficient, is \(\mathcal{NP}\)-complete. We provide a solution procedure to solve the problem on two-terminal series-parallel graphs in pseudopolynomial time.
		\end{abstract}
		
		\begin{keyword}
			Network Interdiction \sep Multiobjective Optimization \sep Shortest Path
		\end{keyword}
		
	\end{frontmatter}
	
	\section{Introduction}
	The shortest path network interdiction problem \ProblemA\ usually involves two parties competing against each other. One player tries to compute its shortest path from source to sink, while the second player, called the interdictor, who is subject to a restricted interdiction budget, removes arcs from the network to maximally deteriorate the first players shortest path length. 
	
	One of the earliest works representing a special case of \ProblemA, called the \(k\) most vital arcs problem, in which the interdiction of an arc requires exactly one unit of the interdictors budget, has been studied by the authors in \cite{malik1989k}. 
	The \(k\) most vital arcs problem as well as \ProblemA\ have been shown to be \(\mathcal{NP}\)-hard, cf. \cite{bar1998complexity, ball1989finding}. In \cite{corley1982most}, the \(k\) most vital arcs problem is analyzed and related to the $k$ shortest path problem. The authors also provide an algorithm to obtain a most vital link, which is again a special case of the \(k\) most vital arcs problem for \(k\) equals \(1\). Instead of removing arcs, \cite{fulkerson1977maximizing} and \cite{golden1978problem} study related problems, where each arc is associated with an interdiction cost per unit to increase the effective length of that arc. Another variant is studied in \cite{khachiyan2008short}, where each vertex is associated with a number denoting how many outgoing arcs might be deleted. An extension of Dijkstra's algorithm efficiently solving the problem is provided along with inapproximability bounds for the $k$ most vital arcs problem and various related problems. This variant is in turn altered in \cite{andersson2009perfect} to solve a shortest path interdiction problem with node-wise budget, where partial interdiction is allowed. Additionally, a shortest path interdiction problem with node-wise budget and a bottleneck objective is considered along with an algorithmic idea for the shortest path interdiction problem with bottleneck objective, a global budget and unit interdiction costs. In one of the most prominent works dealing with \ProblemA\ in its most general form, cf. \cite{israeli2002shortest}, two algorithms with different quality, depending on whether interdicted arcs are removed or if an interdicted arc's length is increased by some value, are provided.
	
	However, literature on biobjective extensions and variants of \ProblemA\ is rather sparse.
	In \cite{ramirez2010bi}, a biobjective variant considering the maximization of the shortest path length and the minimization of interdiction costs is investigated. Finding an optimal route for an ambulance is considered in \cite{torchiani2017shortest}. The problem is modelled as a biobjective problem where the first objective seeks to minimize the shortest path from source to sink while the second objective minimizes the maximal length of a detour in case the chosen route is blocked. 
	
	\paragraph{Our contribution}
	To the best of our knowledge, we provide the first extension of \ProblemA\ involving an additional player, called \Problem: Each arc in the network is associated with two integer lengths and both the first and the second player aim to compute their respective shortest path from source to sink. The interdictor's task is to remove arcs from the network while satisfying a given interdiction budget to maximize both the first and the second players objective. This could be of particular interest whenever two different parties want to pass through a common network as fast as possible, for which one aims to identify the most critical components, see for example nuclear smuggling interdiction, cf. \cite{morton2007models}.
	We formally introduce the problem in Section \ref{sec:problem} and prove that the number of non-dominated points might be exponential in the number of vertices of the network, see Section \ref{sec:complexity}. Additionally, we show that deciding whether a feasible interdiction strategy is efficient or not, is \(\mathcal{NP}\)-complete. In Section \ref{sec:spgraphs}, we discuss a pseudopolynomial time dynamic programming algorithm on two-terminal series-parallel graphs for \Problem.
	Section \ref{sec:conclusion} summarizes the article and provides further directions of research.
	
	\section{Preliminaries and problem formulation}\label{sec:problem}
	Let \(G=(V,A)\) be a directed network with vertex set~$V$ and arc set~$A$ with \(n:=|V|\) and \(m:=|A|\). If the network \(G\) is not clear from the context, we write \(V(G)\) and \(A(G)\) to refer to the set of vertices and arcs of \(G\), respectively.
	Let \(s,t\in V\) be the source and sink vertex in \(G\), respectively. 
	Each arc in \(G\) is associated with two integer length values, i.e., \(l=(l^1,l^2):A\rightarrow \mathbb{N}^2\) with \(l^i: A\rightarrow \mathbb{N}\) for \(i=1,2\). The maximum arc length is denoted by \(L^1\) and \(L^2\) for player one and two, respectively, i.e., \(L^i:=\max\{l^i(a)\mid a \in A\}\). By \(L_{\max}\), we denote the maximum of \(L^1\) and \(L^2\).
	Note that two possibly different shortest \(s\)-\(t\)-paths \(P^1\) and \(P^2\) might be computed in \(G\) with respect to \(l^1\) and \(l^2\), respectively. By \(\mathcal{P}_{st}(G)\), we denote the set of all \(s\)-\(t\)-paths in \(G\) and we compute the length of a path \(P\in\mathcal{P}_{st}(G)\) with respect to \(l^i\) as the sum of all arc lengths in \(P\), i.e., \(l^i(P):=\sum_{a \in A(P)} l^i(a)\), where \(A(P)\) denotes the set of arcs in \(P\).
	Further, interdicting arc \(a \in A\) is associated with a cost \(c(a)\in \mathbb{N}\) and \(B\in \mathbb{N}\) denotes the given interdiction budget. Thus, by \(\Gamma\) we refer to the set of all feasible interdiction strategies, i.e.,  
	\begin{equation*}
	\Gamma := \left\{\gamma = (\gamma_a)_{a \in A} \in \{0,1\}^m \mid \sum\limits_{a \in A} c(a)\cdot\gamma_a \leq B\right\},	
	\end{equation*} 
	where the binary variable \(\gamma_a\) equals one if arc \(a\) is interdicted and zero otherwise. Consequently, each feasible interdiction strategy \(\gamma \in \Gamma\) induces an interdicted graph \(G(\gamma):= (V', A')\) with \(V'=V\) and \(A' = A\setminus A(\gamma)\), where \(A(\gamma) := \left\{a \in A \mid \gamma_a = 1 \right\}\). 
	
	We interpret this setting as a game composed of three players. 
	Whereas player one and two compute their respective shortest \(s\)-\(t\)-paths independently from each other in \(G(\gamma)\) for some \(\gamma\in\Gamma\), the interdictor aims to maximize both shortest path lengths simultaneously by fixing some interdiction strategy \(\gamma\in\Gamma\). Thus, each interdiction strategy yields a tuple of shortest path lengths in \(G\), i.e., 
	\begin{equation*}
	f_G: \Gamma \rightarrow \mathbb{N}^2,\quad \gamma \rightarrow \begin{pmatrix}\min\limits_{P\in\mathcal{P}_{st}(G(\gamma))} l^1(P) \\ \min\limits_{P\in\mathcal{P}_{st}(G(\gamma))} l^2(P)\end{pmatrix} := \begin{pmatrix} f^1_G(\gamma) \\ f^2_G(\gamma)\end{pmatrix}.
	\end{equation*}
	To compare vectors of shortest path lengths, we use the standard  Pareto-order in \(\mathbb{N}^2\), cf. \cite{ehrgott2005multicriteria}, which is defined as follows:
	\begin{equation*}
	y_1 \geq y_2 \Leftrightarrow y_1^k \geq y_2^k \text{ for } k = 1,2 \text{ and } y_1 \neq y_2,
	%& y^1 >_{lex} y^2 \Leftrightarrow y_{k^*}^1 > y_{k^*}^2 \text{, } k^* := \min\{k: y_k^1 \neq y_k^2\}
	\end{equation*}
	where \(y_1=(y_1^1,y_1^2)\) and \(y_2=(y_2^1,y_2^2)\).
	Since \(\geq\) does not define a total order on the objective function values in \(\mathbb{N}^2\), one aims to find the feasible interdiction strategies \(\gamma\in\Gamma\) that do not allow to improve the objective of the first shortest path player without deteriorating the second player's objective.
	Thus, \Problem\ can be stated as \(\max_{\gamma\in\Gamma} f_G(\gamma)\).
	\begin{definition}
		A feasible interdiction strategy \(\gamma \in \Gamma\) is called efficient, if there does not exist \(\gamma' \in \Gamma\) such that
		\begin{equation*}
		f_G(\gamma') \geq f_G(\gamma). 
		\end{equation*}
		In this case, we call \(f_G(\gamma)\) a non-dominated point. With \(\Gamma_E\), we denote the set of efficient interdiction strategies. The set of all non-dominated points is denoted by \(Z_N\).
	\end{definition}
	%In what follows, we aim to obtain exactly one interdiction strategy for each non-dominated point in \(Z_N\), called a minimal complete set.
	
	A special emphasis is put on two-terminal series-parallel graphs, which are defined as follows, cf. \cite{eppstein1992parallel}. 
	\begin{definition}\label{def:spgraph}
		A directed network \(G=(V,A)\) is called two-terminal series-parallel with source \(s\) and sink \(t\), if \(G\) can be constructed by a sequence of the following operations.
		\begin{itemize}
			\item[1)] Construct a primitive graph \(G'=(V',A')\) with \(V'=\{s,t\}\) and \(A'=\{(s,t)\}\).
			\item[2)] (Parallel Composition) Given two directed, series-parallel graphs \(G_1\) with source \(s_1\) and sink \(t_1\) and \(G_2\) with source \(s_2\) and sink \(t_2\), form a new graph \(G\) by identifying \(s=s_1=s_2\) and \(t=t_1=t_2\).
			\item[3)] (Series Composition) Given two directed, series-parallel graphs \(G_1\) with source \(s_1\) and sink \(t_1\) and \(G_2\) with source \(s_2\) and sink \(t_2\), form a new graph \(G\) by identifying \(s=s_1\), \(t_1=s_2\) and \(t_2=t\).
		\end{itemize}
	\end{definition}
	
	A two-terminal series-parallel graph \(G\) can be recognized in polynomial time along with the corresponding decomposition tree \(T_G\). Further, the size of \(T_G\) is linear in the size of \(G\) and \(T_G\) can be computed in linear time, cf. \cite{valdes1982recognition}. The decomposition tree \(T_G\) specifies how \(G\) has been constructed using the rules mentioned in Definition \ref{def:spgraph}. In the following, we assume that each vertex in \(T_G\) is associated with a two-terminal series-parallel graph, for an example see Figure \ref{fig:decomposition_tree}. Thus, if we refer to a graph \(H\) in \(T_G\), we actually refer to the graph \(H\), which actually denotes a subgraph of \(G\), corresponding to a vertex in \(T_G\).
	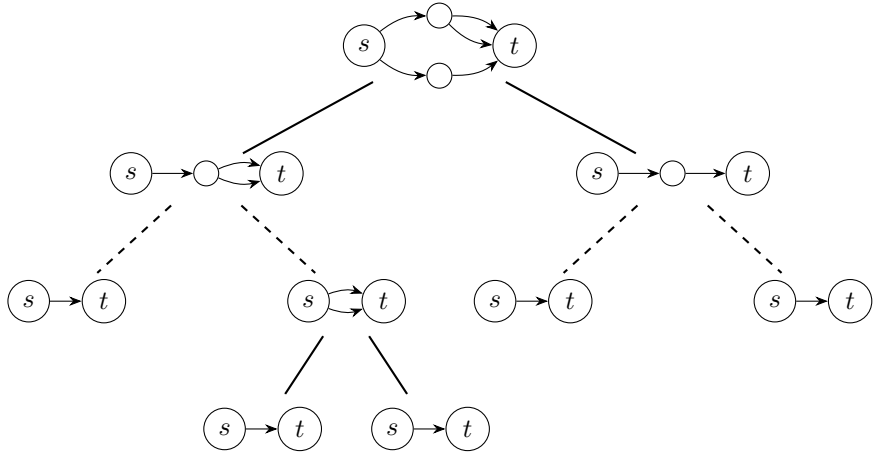
\begin{figure}
		\begin{center}
			\begin{tikzpicture}
			\tikzmath{
				\smallx = 1;
				\smally = 0.4;
				\basex = 3.1;
				\basey = 1.7;
				\factor = 0.6;
				\gap = 0.0;
				\bigBend = 40;
				\smallBend = 20;
				\rootx = 0;
				\rooty = 0;
				\rootlx = \rootx - \basex;
				\rootrx = \rootx + \basex;
				\leveloney = \rooty - \basey;
				\rootllx = \rootlx - \factor*\basex;
				\rootlrx = \rootlx + \factor*\basex;
				\rootrlx = \rootrx - \factor*\basex;
				\rootrrx = \rootrx + \factor*\basex;
				\leveltwoy = \rooty - 2*\basey;
				\rootlrlx = \rootlrx - \factor*\factor*\basex;
				\rootlrrx = \rootlrx + \factor*\factor*\basex;
				\levelthreey = \rooty - 3*\basey;
				\rightshift = 0.0*\smallx;
				\radius = 0.55;
			}
			
			%\draw[-latex, thick, red] (2, 2) -- (0, -2.75);
			\draw[thick] (\rootx-\gap+2*\rightshift, \rooty-\gap) -- (\rootlx+\gap+2*\rightshift, \leveloney+\gap);
			\draw[thick] (\rootx+\gap+2*\rightshift, \rooty-\gap) -- (\rootrx-\gap+2*\rightshift, \leveloney+\gap);
			\draw[thick, dashed] (\rootlx-\gap+2*\rightshift, \leveloney-\gap) -- (\rootllx+\gap+1*\rightshift, \leveltwoy+\gap);
			\draw[thick, dashed] (\rootlx+\gap+2*\rightshift, \leveloney-\gap) -- (\rootlrx-\gap+1*\rightshift, \leveltwoy+\gap);
			\draw[thick] (\rootlrx-\gap+1*\rightshift, \leveltwoy-\gap) -- (\rootlrlx+\gap+1*\rightshift, \levelthreey+\gap);
			\draw[thick] (\rootlrx+\gap+1*\rightshift, \leveltwoy-\gap) -- (\rootlrrx-\gap+1*\rightshift, \levelthreey+\gap);
			\draw[thick, dashed] (\rootrx-\gap+2*\rightshift, \leveloney-\gap) -- (\rootrlx+\gap+1*\rightshift, \leveltwoy+\gap);
			\draw[thick, dashed] (\rootrx+\gap+2*\rightshift, \leveloney-\gap) -- (\rootrrx-\gap+1*\rightshift, \leveltwoy+\gap);
			
			\filldraw [white] (\rootx,\rooty) circle (1);
			\filldraw [white] (\rootlx,\leveloney) circle (\radius);
			\filldraw [white] (\rootllx,\leveltwoy) circle (\radius);
			\filldraw [white] (\rootlrx,\leveltwoy) circle (\radius);
			\filldraw [white] (\rootlrlx,\levelthreey) circle (\radius);
			\filldraw [white] (\rootlrrx,\levelthreey) circle (\radius);
			\filldraw [white] (\rootrx,\leveloney) circle (\radius);
			\filldraw [white] (\rootrlx,\leveltwoy) circle (\radius);
			\filldraw [white] (\rootrrx,\leveltwoy) circle (\radius);
			
			\begin{scope}[every node/.style={circle,draw, fill=white}]
			\node (S) at  ( \rootx-1*\smallx, 0) {$s$};
			\node (v1) at  ( \rootx, \smally) {};
			\node (v2) at  ( \rootx, -\smally) {};
			\node (T) at  ( \rootx+1*\smallx, 0) {$t$};
			\end{scope}
			\begin{scope}[>={Stealth[black]}, sloped]
			\path [->] (S) [bend left=\smallBend]edge node [below] {} (v1);
			\path [->] (S) [bend right=\smallBend]edge node [below] {} (v2);
			\path [->] (v2) [bend right=\smallBend]edge node [below] {} (T);
			
			\path [->] (v1) [bend left=\smallBend] edge node [above] {} (T);
			\path [->] (v1) [bend right=\smallBend] edge node [above] {} (T);
			\end{scope}
			
			\begin{scope}[every node/.style={circle,draw, fill=white}]
			\node (S) at  (-1*\smallx+\rootlx, 0+\leveloney) {$s$};
			\node (v) at  ( 0*\smallx+\rootlx, 0+\leveloney) {};
			\node (T) at  ( 1*\smallx+\rootlx, 0+\leveloney) {$t$};
			\end{scope}
			\begin{scope}[>={Stealth[black]}, sloped]
			\path [->] (S) [bend left=0]edge node [below] {} (v);
			\path [->] (v) [bend left=\smallBend] edge node [above] {} (T);
			\path [->] (v) [bend right=\smallBend] edge node [above] {} (T);
			\end{scope}
			
			\begin{scope}[every node/.style={circle,draw, fill=white}]
			\node (S) at  (-0.5*\smallx+\rootllx, 0+\leveltwoy) {$s$};
			\node (T) at  ( 0.5*\smallx+\rootllx, 0+\leveltwoy) {$t$};
			\end{scope}
			\begin{scope}[>={Stealth[black]}, sloped]
			\path [->] (S) [bend right=0]edge node [below] {} (T);
			\end{scope}
			
			\begin{scope}[every node/.style={circle,draw, fill=white}]
			\node (S) at  (-0.5*\smallx+\rootlrx, 0+\leveltwoy) {$s$};
			\node (T) at  ( 0.5*\smallx+\rootlrx, 0+\leveltwoy) {$t$};
			\end{scope}
			\begin{scope}[>={Stealth[black]}, sloped]
			\path [->] (S) [bend right=\smallBend]edge node [below] {} (T);
			\path [->] (S) [bend left=\smallBend]edge node [below] {} (T);
			\end{scope}
			
			\begin{scope}[every node/.style={circle,draw, fill=white}]
			\node (S) at  (-0.5*\smallx+\rootlrlx, 0+\levelthreey) {$s$};
			\node (T) at  ( 0.5*\smallx+\rootlrlx, 0+\levelthreey) {$t$};
			\end{scope}
			\begin{scope}[>={Stealth[black]}, sloped]
			\path [->] (S) [bend right=0]edge node [below] {} (T);
			\end{scope}
			
			\begin{scope}[every node/.style={circle,draw, fill=white}]
			\node (S) at  (-0.5*\smallx+\rootlrrx, 0+\levelthreey) {$s$};
			\node (T) at  ( 0.5*\smallx+\rootlrrx, 0+\levelthreey) {$t$};
			\end{scope}
			\begin{scope}[>={Stealth[black]}, sloped]
			\path [->] (S) [bend right=0]edge node [below] {} (T);
			\end{scope}
			
			\begin{scope}[every node/.style={circle,draw, fill=white}]
			\node (S) at  (-1*\smallx+\rootrx, 0+\leveloney) {$s$};
			\node (v) at  ( 0*\smallx+\rootrx, 0+\leveloney) {};
			\node (T) at  ( 1*\smallx+\rootrx, 0+\leveloney) {$t$};
			\end{scope}
			\begin{scope}[>={Stealth[black]}, sloped]
			\path [->] (S) [bend left=0]edge node [below] {} (v);
			\path [->] (v) [bend left=0] edge node [above] {} (T);
			\end{scope}
			
			\begin{scope}[every node/.style={circle,draw, fill=white}]
			\node (S) at  (-0.5*\smallx+\rootrlx, 0+\leveltwoy) {$s$};
			\node (T) at  ( 0.5*\smallx+\rootrlx, 0+\leveltwoy) {$t$};
			\end{scope}
			\begin{scope}[>={Stealth[black]}, sloped]
			\path [->] (S) [bend right=0]edge node [below] {} (T);
			\end{scope}
			
			\begin{scope}[every node/.style={circle,draw, fill=white}]
			\node (S) at  (-0.5*\smallx+\rootrrx, 0+\leveltwoy) {$s$};
			\node (T) at  ( 0.5*\smallx+\rootrrx, 0+\leveltwoy) {$t$};
			\end{scope}
			\begin{scope}[>={Stealth[black]}, sloped]
			\path [->] (S) [bend right=0]edge node [below] {} (T);
			\end{scope}
			\end{tikzpicture}
		\end{center}
		\caption{Decomposition tree \(T_G\) of a two-terminal series-parallel graph $G$. The root vertex corresponds to $G$ itself. Every leaf corresponds to a primitive graph, i.e., a graph including only a single arc of \(G\). Dashed lines correspond to series compositions, whereas straight lines correspond to parallel compositions.}
		\label{fig:decomposition_tree}
	\end{figure}

	In the following, \((G,l,c,B)\) denotes an instance of \Problem, where \(G=(V,A)\) is a directed network, \(l\) denotes a length function assigning two lengths to each arc, \(c\) assigns every arc an interdiction cost and \(B\) refers to the interdiction budget.

	\section{Complexity}\label{sec:complexity}
	In this section, we investigate the complexity and tractability of \Problem. Therefore, we consider the following decision version of \Problem, which asks whether a given interdiction strategy is efficient or not: Given an instance \((G,l,c,B)\) of \Problem\ and a value \(K = (K^1,K^2) \in \mathbb{N}^2\), decide whether there exists an interdiction strategy \(\gamma \in \Gamma\) with \(f_G(\gamma)\geq K\).
	\begin{theorem}\label{thm:npc}
		The decision version of \Problem\ is \(\mathcal{NP}\)-complete, even on two-terminal series-parallel graphs.
	\end{theorem}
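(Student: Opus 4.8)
The plan is to establish both membership in $\mathcal{NP}$ and $\mathcal{NP}$-hardness, the latter by a polynomial reduction from the weakly $\mathcal{NP}$-complete \textsc{Partition} problem. Reducing from a number problem rather than from a structural one such as \textsc{3-Sat} is not incidental but necessary: a reduction from a strongly $\mathcal{NP}$-hard problem would contradict the pseudopolynomial algorithm for two-terminal series-parallel graphs given in Section \ref{sec:spgraphs}, so the hardness obtained here is inherently ``weak''.

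For membership, given a candidate strategy $\gamma$ I would first check feasibility by summing the interdiction costs and testing $\sum_{a\in A}c(a)\gamma_a\le B$, and then compute $f_G^1(\gamma)$ and $f_G^2(\gamma)$ by two shortest-path computations in $G(\gamma)$; comparing the resulting vector with $K$ decides the instance. Since all of this runs in polynomial time, the decision version of \Problem\ lies in $\mathcal{NP}$.

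For hardness, I would start from a \textsc{Partition} instance $a_1,\dots,a_n\in\mathbb{N}$ with $\sum_i a_i = 2S$, discarding the trivial no-instance in which some $a_i>S$. I construct a two-terminal series-parallel graph as the series composition of $n$ gadgets, where gadget $i$ is the parallel composition of two arcs: an interdictable arc $e_i$ with $l(e_i)=(0,0)$ and cost $c(e_i)=a_i$, and a backup arc $h_i$ with $l(h_i)=(a_i,a_i)$ and cost $c(h_i)=B+1$, so that $h_i$ can never be removed. I set $B:=S$ and $K:=(S,S-1)$. Because every gadget retains the always-present arc $h_i$, the graph stays connected under each $\gamma\in\Gamma$, and in a series composition the shortest-path value is the sum of the per-gadget shortest values; hence $f_G^1(\gamma)=f_G^2(\gamma)=\sum_{i:\,\gamma_{e_i}=1}a_i$. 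The budget forces $\sum_{i:\,\gamma_{e_i}=1}a_i\le S$, whereas $f_G(\gamma)\ge K$ forces $\sum_{i:\,\gamma_{e_i}=1}a_i\ge S$ through the first coordinate; the two bounds meet iff the interdicted arcs single out a subset of the $a_i$ summing to exactly $S$, that is, iff the \textsc{Partition} instance is a yes-instance. The lowered second coordinate of $K$ serves only to make the attained point $(S,S)$ strictly dominate $K$, so that the ``$\ne$'' in the definition of $\ge$ is met.

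The main obstacle is not the arithmetic of this pincer argument but ruling out every shortcut by which the interdictor might reach $K$ outside the intended correspondence. Two such shortcuts must be excluded: disconnecting $s$ from $t$, which would yield infinite and hence dominating path lengths, and boosting a single objective twice for one item. The first is handled by the uninterdictable backup arcs $h_i$, which keep both objective values finite for every feasible strategy; the second is handled by placing exactly one interdictable arc in each gadget, so that a strategy corresponds to a genuine subset of items and a removal contributes $a_i$ to each coordinate exactly once. Checking these two points carefully, together with the identity $f_G^1(\gamma)=f_G^2(\gamma)=\sum_{i:\,\gamma_{e_i}=1}a_i$ on the constructed graph, is where the proof has to be made airtight.
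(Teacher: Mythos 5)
Your proof is correct, but it takes a genuinely different route from the paper. The paper's proof is a two-line citation argument: it observes that the decision version of \ProblemA\ is already \(\mathcal{NP}\)-complete, cf.\ \cite{ball1989finding}, so the biobjective version inherits hardness (e.g., by duplicating the length function), and it merely \emph{asserts}, without details, that the reduction of Ball et al.\ can be modified to yield two-terminal series-parallel instances. You instead give a self-contained reduction from \textsc{Partition} that constructs the series-parallel instance explicitly: the series-of-parallel gadget graph, the identity \(f^1_G(\gamma)=f^2_G(\gamma)=\sum_{i:\gamma_{e_i}=1}a_i\), and the pincer \(\sum_{i:\gamma_{e_i}=1}a_i\le S\) (budget) versus \(\ge S\) (first coordinate of \(K\)) all check out, and your choice \(K=(S,S-1)\) correctly accommodates the paper's Pareto order, under which \(f_G(\gamma)\geq K\) excludes equality --- a detail the paper's citation-based argument silently glosses over. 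What each approach buys: the paper's is shorter and leans on known results, but leaves the series-parallel adaptation to the reader, which is exactly the nontrivial part of the claim; yours makes that part explicit, and your observation that hardness on this graph class can only be weak --- since a strong \(\mathcal{NP}\)-hardness proof would clash with the pseudopolynomial algorithm of Section \ref{sec:spgraphs} unless \(\mathcal{P}=\mathcal{NP}\) --- correctly explains why a number problem such as \textsc{Partition} is the natural source. One step in your write-up is dispensable: discarding instances with some \(a_i>S\) is harmless but unnecessary, since then \(c(e_i)=a_i>B\) makes \(e_i\) uninterdictable, and no subset at all sums to \(S\) (a subset summing to \(S\) would force its complement, which contains \(i\), to sum to \(S\) as well, contradicting \(a_i>S\) and nonnegativity), so the equivalence holds for such instances too.
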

	\begin{proof}
		Follows immediately from the fact that the decision version of \ProblemA\ is \(\mathcal{NP}\)-complete on two-terminal series-parallel graphs, cf. \cite{ball1989finding}. Although the reduction proof presented by the authors does not use a two-terminal series-parallel graph, it can be easily modified such that \(\mathcal{NP}\)-completeness also holds for two-terminal series-parallel graphs.
	\end{proof}
	
	Further, in the field of multiobjective optimization one is usually interested in the (in)tractability of a specific problem, i.e., to investigate whether there exists a problem instance with an exponential number of non-dominated points with respect to the size of that instance.
	\begin{theorem}
		The problem \Problem\ is intractable, even on two-terminal series-parallel graphs and even for unit interdiction costs, i.e., the number of non-dominated points might be exponential in the size of the problem instance.
	\end{theorem}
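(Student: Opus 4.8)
The plan is to prove intractability by exhibiting, for every \(k\in\mathbb{N}\), an instance whose non-dominated set has cardinality \(2^k\) while the encoding length is polynomial in \(k\). I would let \(G\) be the series composition of \(k\) gadgets \(H_1,\dots,H_k\), where each \(H_i\) is the parallel composition of two arcs \(a_i^+\) and \(a_i^-\) with lengths \(l(a_i^+)=(2^{i-1},0)\) and \(l(a_i^-)=(0,2^{i-1})\). All interdiction costs are unit and the budget is \(B=k\). By construction \(G\) is two-terminal series-parallel (a series composition of parallel compositions), it has \(O(k)\) vertices and arcs, and its largest arc length is \(2^{k-1}\); hence the whole instance \((G,l,c,B)\) is encoded in \(O(k^2)\) bits, and the construction already meets the ``even for unit costs'' and ``even on series-parallel graphs'' requirements.

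For a subset \(S\subseteq\{1,\dots,k\}\) I would consider the strategy \(\gamma_S\) that deletes \(a_i^-\) in every gadget with \(i\in S\) and deletes \(a_i^+\) in every gadget with \(i\notin S\). Each \(\gamma_S\) removes exactly one arc per gadget, so it is feasible, and it leaves exactly one surviving arc in every \(H_i\); both players are thus forced onto the \emph{same} arc in each gadget. Since lengths add along a series composition, I would compute \(f_G(\gamma_S)=(\sigma(S),\sigma(\bar S))\), where \(\sigma(X):=\sum_{i\in X}2^{i-1}\) and \(\bar S:=\{1,\dots,k\}\setminus S\). Because the values \(\sigma(S)\) are pairwise distinct (uniqueness of binary representations) and \(\sigma(S)+\sigma(\bar S)=2^k-1\) for every \(S\), these \(2^k\) image points are pairwise distinct and all lie on the anti-diagonal \(y^1+y^2=2^k-1\); in particular they are pairwise incomparable.

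The remaining task is to certify that each \(f_G(\gamma_S)\) is non-dominated over all feasible strategies. Here I would classify an arbitrary \(s\)-\(t\)-connected strategy \(\gamma\) by the partition \((P,M,C)\) of the gadget indices into those where only \(a_i^+\) survives, only \(a_i^-\) survives, and both survive, respectively. A direct evaluation gives \(f_G(\gamma)=(\sigma(P),\sigma(M))\) with \(P\) and \(M\) disjoint, so \(f_G^1(\gamma)+f_G^2(\gamma)=\sigma(P\cup M)\le\sigma(\{1,\dots,k\})=2^k-1\), with equality iff \(C=\emptyset\). Consequently any point dominating \(f_G(\gamma_S)\) would need objective sum at least \(2^k-1\), which forces \(C=\emptyset\) and, by distinctness of subset sums, \(P=S\) and \(M=\bar S\); that is, the only strategy whose image is \(\ge f_G(\gamma_S)\) is \(\gamma_S\) itself, so no feasible strategy dominates it. This yields \(|Z_N|\ge 2^k\), exponential in the instance size, and hence proves intractability.

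The step I expect to be the crux is this last one, and precisely the treatment of strategies that delete \emph{both} arcs of some gadget. Such a strategy disconnects \(s\) from \(t\) inside that gadget, so the associated shortest-path length is \(+\infty\) and would dominate every finite point, collapsing the argument. The computation above is clean exactly because the objective \(f_G\) is only meaningful on \(s\)-\(t\)-connected interdicted graphs, so these strategies are excluded from the comparison (equivalently, one restricts \(\Gamma\) to connectivity-preserving strategies, as is standard in shortest-path interdiction). I would therefore make this restriction explicit and, crucially, observe that within it no single gadget can raise both players' lengths at once: keeping the graph connected forces at least one of \(a_i^+,a_i^-\) to survive, which is exactly what makes \(P\) and \(M\) disjoint and pins \(Z_N\) down to the \(2^k\) points of the anti-diagonal.
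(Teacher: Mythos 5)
Your argument is correct \emph{only} for a restricted model that is not the one defined in this paper, and the restriction you invoke in your final paragraph is exactly where the proof breaks. Here, \(\Gamma\) contains every budget-feasible strategy---there is no connectivity-preservation constraint---and a strategy separating \(s\) from \(t\) is evaluated as \((\infty,\infty)\); this is not a convention you are free to discard, since the paper's dynamic program explicitly assigns the label \((\infty,\infty)\) to an interdicted primitive arc, and for a \emph{maximizing} interdictor, disconnection is the best attainable outcome rather than an inadmissible one. In your instance with \(k\ge 2\), deleting both \(a_1^+\) and \(a_1^-\) costs \(2\le B=k\) and yields \((\infty,\infty)\), which dominates every finite point, so \(Z_N\) collapses to a single point and the instance witnesses nothing. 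Declaring such strategies excluded ``as is standard in shortest-path interdiction'' changes the feasible set and hence the problem; your proof then establishes intractability of a different problem, not of \Problem\ as defined.

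The paper's proof confronts precisely this issue and resolves it \emph{inside the instance} rather than in the model: between \(v_i\) and \(v_{i+1}\) it places one \((0,0)\) arc together with \(B=\frac{n+1}{2}\) parallel copies of a single arc of length \((2^i,\,2^n-2^i)\), so every \(s\)-\(t\) cut contains \(B+1\) arcs and no feasible strategy can separate \(s\) from \(t\); the authors state this explicitly (``due to construction there does not exist a \(\gamma\in\Gamma\) such that we can separate \(s\) from \(t\)''). Note also that the paper's coupling of both objectives within one arc is not incidental: the obvious repair of your gadget---duplicating \(a_i^{\pm}\) kills the effect of single deletions, while adding \(B\) backup copies of a both-expensive arc \((2^{i-1},2^{i-1})\) prevents disconnection---re-introduces dominating moves, since spending two units in gadget \(i\) then contributes \((2^{i-1},2^{i-1})\) and, e.g., for \(k=2\) the point \((2,2)\) dominates your anti-diagonal points \((1,2)\) and \((2,1)\). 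In the paper's design a budget unit spent in gadget \(i\) necessarily raises the two objectives by the complementary amounts \(2^i\) and \(2^n-2^i\), which is what keeps all \(\binom{n+1}{(n+1)/2}\) maximal-spend strategies pairwise incomparable. To salvage your write-up you would need to adopt this coupled-lengths mechanism (or otherwise make disconnection infeasible with unit costs), not merely assert the connectivity restriction.
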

	\begin{proof}
		To prove intractability of \Problem, we construct an instance, where the number of non-dominated points is exponential in the number of vertices. Therefore, let \((G,l,c,B)\) be an instance of \Problem\ with \(c(a)=1\) for all \(a \in A\) and \(n+2\) vertices, i.e., \(V:= \{s=v_0, v_1, \ldots,v_{n+1}=t\}\) with \(n\) being odd. There are two types of arcs going from \(v_i\) to \(v_{i+1}\) for all \(i=0,\ldots,n\), denoted by \(a^1_i\) and \(a^2_i\), respectively, with \(l(a^1_i)=(0,0)\) and \(l(a^2_i) = (2^i, 2^n-2^i)\). We create \(\frac{n+1}{2}\) copies of the latter type of arcs for all \(i=0,\ldots,n\) such that the resulting network has \((\frac{n+1}{2}+1)(n+1)\) arcs. Further, we set \(B=\frac{n+1}{2}\). Note that due to construction there does not exist a \(\gamma\in \Gamma\) such that we can separate \(s\) from \(t\). Further, one can see that all \(\gamma\in\Gamma\) with \(\sum_{a \in A}\gamma_a<B\) cannot be efficient and that it is always beneficial to interdict arcs of type \(a^1\) instead of \(a^2\).
		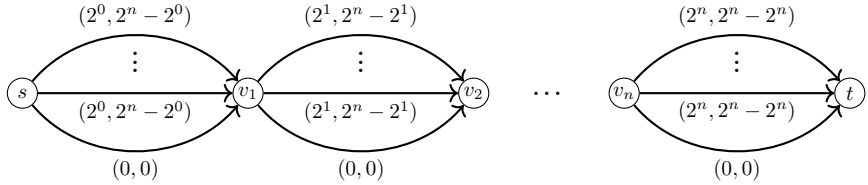
\begin{figure}[htbp]
			\centering
			\begin{tikzpicture}[scale=1, every node/.style={scale=0.8}]
			\node[draw, circle, minimum size=0.5cm, inner sep=0pt] (s) at (0,0) {\(s\)};
			\node[draw, circle, minimum size=0.5cm, inner sep=0pt] (a) at (3,0) {\(v_1\)};
			\node[draw, circle, minimum size=0.5cm, inner sep=0pt] (b) at (6,0) {\(v_2\)};
			\node[draw, circle, minimum size=0.5cm, inner sep=0pt] (c) at (8,0) {\(v_n\)};
			\node[draw, circle, minimum size=0.5cm, inner sep=0pt] (t) at (11,0) {\(t\)};
			
			\node[] (v) at (7,0){\textbf{\dots}};
			
			\draw [->,thick] (s)edge [bend angle=50, bend left] node[above]{$(2^0, 2^n-2^0)$}(a);
			\draw[->, thick] (s)--node[below]{$(2^0, 2^n-2^0)$}(a);
			\draw [->,thick] (s)edge [bend angle=50, bend right] node[below]{$(0,0)$}(a);
			\node[] (v) at (1.5,0.5){\textbf{\vdots}};
			\draw [->,thick] (a)edge [bend angle=50, bend left] node[above]{$(2^1, 2^n-2^1)$}(b);
			\draw[->, thick] (a)--node[below]{$(2^1, 2^n-2^1)$}(b);
			\draw [->,thick] (a)edge [bend angle=50, bend right] node[below]{$(0,0)$}(b);
			\node[] (v) at (4.5,0.5){\textbf{\vdots}};
			\draw [->,thick] (c)edge [bend angle=50, bend left] node[above]{$(2^{n}, 2^n-2^{n})$}(t);
			\draw[->, thick] (c)--node[below]{$(2^{n}, 2^n-2^{n})$}(t);
			\draw [->,thick] (c)edge [bend angle=50, bend right] node[below]{$(0,0)$}(t);
			\node[] (v) at (9.5,0.5){\textbf{\vdots}};
			\end{tikzpicture}
			\caption{Intractability instance of \Problem}
			\label{fig:intractable}
		\end{figure}
		Therefore, we only consider those interdiction strategies \(\gamma\in\Gamma\) with \(\sum_{a \in A}\gamma_a=B\) and \(\gamma_a\) equals \(0\) for all \(a\) with \(l(a)\neq(0,0)\). We denote the set by \(\Gamma^*\). It holds for the shortest path length of the first player that:
		\begin{equation*}
		2^{\frac{n+1}{2}}-1 \leq f^1_G(\gamma)\leq 2^{n+1} - 2^{\frac{n+1}{2}} \quad \text{ for all } \gamma \in \Gamma^*.
		\end{equation*}
		Note that the lower and upper bound are attained by removing the first and last \(B\) arcs of type \(a^1\), respectively. Further, for every \(\gamma \in \Gamma^*\),  it holds that \(f^2_G(\gamma) = B2^n - f^1_G(\gamma)\). Additionally,  \(f^1_G(\gamma)\neq f^1_G(\gamma')\) for all \(\gamma, \gamma'\in\Gamma^*\) with \(\gamma\neq\gamma'\). Thus, each \(\gamma\in\Gamma^*\) induces a different non-dominated point. Since \(|\Gamma^*|={n+1 \choose B} = {n+1 \choose \frac{n+1}{2}}\) and using Stirling's formula, we showed that the number of non-dominated points is exponential in \(n\), which concludes the proof.
	\end{proof}
	
	\section{Solution method for two-terminal series-parallel graphs}\label{sec:spgraphs}
	%First, we briefly describe a procedure for solving \Problem\ on general directed graphs, which is a straightforward extension of the algorithm presented in \cite{israeli2002shortest}. Afterwards, 
	We state a dynamic programming algorithm with a pseudopolynomial running time for the case of two-terminal series-parallel graphs. 
	Throughout this section, we assume that a two-terminal series-parallel graph \(G\) is accompanied by its decomposition tree \(T_G\). We derive a dynamic programming algorithm starting at the leaves of \(T_G\) and iterating bottom up through the decomposition tree. In the course of the algorithm, we create labels of the form \((f^1_H(\gamma), f^2_H(\gamma))\) with \(\gamma\in\Gamma\) and \(H\) being a graph in \(T_G\). By \(\mathcal{L}(H, x)\), we denote the set of all labels correponding to non-dominated points in the graph \(H\) (induced by efficient interdiction strategies in \(H\)) with a total interdiction cost of \(x\), i.e., when the interdictor's budget is exactly \(x\). We aim to find the set of all non-dominated points for all graphs in \(T_G\) and for all \(x\in\{0,\ldots,B\}\).
	
	If \(H=(V_H,A_H)\) is a primitive graph, i.e., a leaf of \(T_G\), with \(V_H=\{s_H,t_H\}\) and \(A_H=\{a^*=(s_H, t_H)\}\), we can clearly calculate \(\mathcal{L}(H,x)\) in case of \(c(a^*)\leq B\) for all \(x \in \{0,1,\ldots,B\}\), in the following way:
	\begin{equation}\label{eq:1}
	\mathcal{L}(H,x) =\begin{cases}
	\{(l^1(a^*),l^2(a^*))\}, & \text{ if } x = 0,1,\ldots,c(a^*)-1\\
	\{(\infty,\infty)\}, & \text{ if } x = c(a^*),\ldots,B
	\end{cases}
	\end{equation}
	If \(c(a^*)>B\), then \(\mathcal{L}(H,x)\) is equal to \(\{(l^1(a^*),l^2(a^*))\}\) for all \(x\in\{0,1,\ldots,B\}\).
	
	For \(H\) being the series or parallel composition of \(H_1\) and \(H_2\), we define the following two operations.
	\begin{definition}
		Let \(A,B\subset \mathbb{N}^2\) be two sets. Then, 
		\begin{itemize}
			\item \(A\oplus B := \{a+b\mid a\in A, b\in B\}\) \Comment{Minkowski sum}
			\item  \(A\odot B:=\{a\odot b\mid a\in A, b\in B\}\), where \(a\odot b:=(\min\{a^1, b^1\}, \min\{a^2, b^2\})\) with \(a=(a^1,a^2)\) and \(b=(b^1,b^2)\)
		\end{itemize}
		
	\end{definition}
	
	If \(H\) is the parallel composition of \(H_1\) and \(H_2\), we calculate \(\mathcal{L}(H, x)\) as follows:
	\begin{equation}\label{eq:3}
	\mathcal{L}(H,x) = \max \left\{\bigcup_{k=0}^x \mathcal{L}(H_1,k) \odot \mathcal{L}(H_2,x-k)\right\} \text{ for } x = 0,1,\ldots,B.
	\end{equation}
	Thus, we combine each non-dominated point \(\mathcal{L}(H_1,k)\) with each non-dominated point in \(\mathcal{L}(H_2,x-k)\) by taking the respective minimum in each component. By \(`\max`\), we denote that all dominated points get discarded afterwards with respect to the Pareto-order.
	
	If \(H\) is the series composition of \(H_1\) and \(H_2\), we calculate \(\mathcal{L}(H, x)\) as follows:
	\begin{equation}\label{eq:2}
	\mathcal{L}(H,x) = \max \left\{\bigcup_{k=0}^x \mathcal{L}(H_1,k) \oplus \mathcal{L}(H_2,x-k)\right\} \text{ for } x = 0,1,\ldots,B.
	\end{equation}
	In this case, non-dominated points are combined by summing them up. Again, dominated points get discarded afterwards.

	\begin{theorem}\label{thm:correctness}
		The set of non-dominated points of \Problem\ can be computed on two-terminal series-parallel graphs by using formulas \eqref{eq:1}, \eqref{eq:3} and \eqref{eq:2}.
	\end{theorem}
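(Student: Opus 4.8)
The plan is to prove the three recursions simultaneously by structural induction on the decomposition tree $T_G$. At each node $H$ the induction hypothesis states that $\mathcal{L}(H,x)$ equals the set of non-dominated points of \Problem\ restricted to the subgraph $H$ when the interdictor has available budget $x$ (that is, ranging over all strategies of total cost at most $x$), for every $x\in\{0,\dots,B\}$; I will write $Z_N(H,x)$ for this target set and $\mathcal{F}(H,x)$ for the (not necessarily non-dominated) set of all objective vectors achievable in $H$ with budget at most $x$. First I would settle the base case. For a primitive graph $H$ the single arc $a^*$ is the unique $s_H$-$t_H$-path, so the only achievable vectors are $(l^1(a^*),l^2(a^*))$ (keep the arc) and $(\infty,\infty)$ (remove it); since removal requires cost $c(a^*)$ and weakly dominates keeping, formula \eqref{eq:1} follows by a case distinction on whether $x$ suffices to pay $c(a^*)$, with the variant $c(a^*)>B$ handled identically.

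For the inductive step I would isolate two structural facts that hold because composition identifies only terminal vertices. (i) The arc sets are disjoint, $A(H)=A(H_1)\cup A(H_2)$, so every $\gamma\in\Gamma$ restricts to strategies $\gamma_1,\gamma_2$ on the two subgraphs with $c(\gamma)=c(\gamma_1)+c(\gamma_2)$. (ii) The objective decomposes: in the parallel case any $s$-$t$-path lies entirely in $H_1$ or in $H_2$, hence the shortest length is the componentwise minimum and $f_H(\gamma)=f_{H_1}(\gamma_1)\odot f_{H_2}(\gamma_2)$; in the series case every $s$-$t$-path splits at the shared cut vertex into an $H_1$-part and an $H_2$-part whose lengths add, hence $f_H(\gamma)=f_{H_1}(\gamma_1)\oplus f_{H_2}(\gamma_2)$. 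Combining (i) and (ii) with budget additivity yields the characterization $\mathcal{F}(H,x)=\bigcup_{k=0}^{x}\mathcal{F}(H_1,k)\diamond\mathcal{F}(H_2,x-k)$, where $\diamond$ stands for $\odot$ or $\oplus$ according to the composition type.

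The crux of the argument, and the step I expect to be the main obstacle, is to show that replacing $\mathcal{F}(H_i,\cdot)$ by the non-dominated sub-labels $\mathcal{L}(H_i,\cdot)=Z_N(H_i,\cdot)$ loses no non-dominated point of $H$. Here I would exploit the monotonicity of both $\odot$ and $\oplus$ under the componentwise order together with budget additivity. Given $z\in Z_N(H,x)$, write $z=f_{H_1}(\gamma_1)\diamond f_{H_2}(\gamma_2)$ with $k:=c(\gamma_1)$ and $c(\gamma_2)\le x-k$; by the induction hypothesis there are labels $p_1\in\mathcal{L}(H_1,k)$ and $p_2\in\mathcal{L}(H_2,x-k)$ weakly dominating $f_{H_1}(\gamma_1)$ and $f_{H_2}(\gamma_2)$, so monotonicity gives $p_1\diamond p_2\ge z$ componentwise. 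Since $p_1\diamond p_2\in\mathcal{F}(H,x)$ and $z$ is non-dominated in $H$, equality $p_1\diamond p_2=z$ is forced, placing $z$ inside the union in \eqref{eq:3}/\eqref{eq:2}. Conversely, every element of that union lies in $\mathcal{F}(H,x)$, so applying $\max$ discards exactly the dominated vectors and leaves precisely $Z_N(H,x)$, establishing the induction hypothesis at $H$. Evaluating at the root of $T_G$, which corresponds to $G$ with budget $B$, then gives $\mathcal{L}(G,B)=Z_N$, as claimed.
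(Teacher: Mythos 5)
Your proposal is correct and takes essentially the same approach as the paper: structural induction on the decomposition tree, using budget additivity across $H_1$ and $H_2$ together with the monotonicity of $\oplus$ and $\odot$ to show that restricting to the non-dominated labels $\mathcal{L}(H_i,\cdot)$ loses no non-dominated point of $H$. The paper casts this key step contrapositively (a missed non-dominated point $y=p+q$ with, say, $p\notin\mathcal{L}(H_1,c)$ would be dominated by $r+q$ for some $r\geq p$), while you argue it directly and also spell out the soundness direction, but the underlying argument is identical.
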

	\begin{proof}
		We use induction on the size of the decomposition tree \(T_G\) of \(G\). Let \(H\) be a graph in \(T_G\).
		If \(H\) is a primitive graph, i.e., a leaf vertex of \(T_G\), the set of non-dominated points can clearly be computed by using \eqref{eq:1}.
		Now, let \(H\) be the series composition of \(H_1\) and \(H_2\). For the sake of contradiction, suppose \(y=(f^1_H(\gamma), f^2_H(\gamma))\) is a non-dominated point in \(\mathcal{L}(H,x^*)\) for some \(\gamma\in\Gamma\) and \(x^* \in \{0,\ldots,B\}\) that has not been found. Let \(y=p+q\), where \(p=(f^1_{H_1}(\gamma^1), f^2_{H_1}(\gamma^1))\) and \(q=(f^1_{H_2}(\gamma^2), f^2_{H_2}(\gamma^2))\) with \(\gamma=\gamma^1+\gamma^2\). Let \(c = \sum_{a\in A}c(a)\cdot\gamma^1_a\). 
		If \(p~\in~\mathcal{L}(H_1,c)\) and \(q\in \mathcal{L}(H_2,x^*-c)\), then \(y\) would have been created at \(\mathcal{L}(H,x^*)\) due to construction of the algorithm. 
		Thus, we assume that \(p\notin \mathcal{L}(H_1,c)\) or \(q\notin \mathcal{L}(H_2,x^*-c)\). Without loss of generality, we assume that \(p\notin \mathcal{L}(H_1,c)\). It follows that there exists a non-dominated point \(r \in \mathcal{L}(H_1,c)\) with \(r\geq p\). Consequently, it holds that \(r+q\geq y\), which is a contradiction to our assumption that \(y\) is non-dominated. 
		The claim can analogously be proven for the case of \(H\) being the parallel composition of \(H_1\) and \(H_2\).
	\end{proof}
	%	Now, let \(H\) be a parallel composition of \(H_1\) and \(H_2\) and let \(y=(y^1,y^2)\) again be a non-dominated point in \(L(H,x^*)\) for some \(x^* \in \{0,\ldots,B\}\) that has not been found. 
	%	Note that if there exist non-dominated points \(p~=~(p^1,p^2)\in L(H_1,c_1)\) and \(q=(q^1,q^2)\in L(G_2,x^*-c_1)\) for some \(c_1\in \{0,\ldots,B\}\) with \(\min\{p^1,q^1\}=y^1\) and \(\min\{p^2,q^2\}=y^2\), then \(y\) would have been created at \(L(H,x^*)\) due to construction of the algorithm. Furthermore, there cannot exist non-dominated points \(k\in L(H_1,c_2)\) and \(l\in L(H_2,x^*-c_2)\) for some \(c_2\in \{0,\ldots,B\}\) with \((\min\{k^1,l^1\},\min\{k^2,l^2\})\geq y\) due to assumption.
	%	Thus, we may assume that either \(p\notin L(H_1,c_1)\) or \(q\notin L(H_2,x^*-c_1)\). Without loss of generality, we assume that \(p\notin L(H_1,c_1)\). It follows that there exists a non-dominated point \(r \in L(H_1,c_1)\) with \((\min\{r^1,q^1\},\min\{r^2,q^2\})\geq p\), which is a contradiction to our assumption that \(y\) is non-dominated. 

	Further, the above described dynamic programming algorithm runs in pseudo-polynomial time.
	\begin{theorem}
		The dynamic programming algorithm has a worst-case running-time complexity of \(\mathcal{O}(mB^2n^2L_{\max}^2\log(BnL_{\max}))\).
	\end{theorem}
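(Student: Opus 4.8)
The plan is to charge the total running time to three independent factors: (i) the number of nodes of the decomposition tree \(T_G\), (ii) the number of budget levels \(x\) processed at each node, and (iii) the cost of evaluating one of the recurrences \eqref{eq:3} or \eqref{eq:2} for a fixed node \(H\) and a fixed \(x\). The first two factors are immediate. Every leaf of \(T_G\) corresponds to exactly one arc of \(G\) and every internal node is a binary (series or parallel) composition, so \(T_G\) has \(m\) leaves and \(O(m)\) nodes in total, consistent with the linear size of \(T_G\) quoted earlier. Moreover, at each node we compute \(\mathcal{L}(H,x)\) for every \(x\in\{0,1,\ldots,B\}\), contributing a factor \(O(B)\). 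It therefore remains to bound the cost of a single evaluation.

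The key quantitative step, and the one I expect to be the main obstacle, is bounding the cardinality of a label set \(\mathcal{L}(H,x)\). Since the points stored in \(\mathcal{L}(H,x)\) are mutually non-dominated, no two of them can share the same first coordinate: if \(y_1^1=y_2^1\) then one of \(y_1,y_2\) dominates the other in the second coordinate. Hence \(|\mathcal{L}(H,x)|\) is at most the number of distinct values the first coordinate \(f^1_H(\gamma)\) can attain. Each such value is either \(\infty\) or the \(l^1\)-length of a shortest \(s_H\)-\(t_H\)-path, which is simple and thus uses at most \(n-1\) arcs, each of \(l^1\)-length at most \(L_{\max}\). Consequently every first coordinate lies in \(\{0,1,\ldots,(n-1)L_{\max}\}\cup\{\infty\}\), so that \(|\mathcal{L}(H,x)|=O(nL_{\max})\).

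With this bound in hand, I would estimate the cost of one evaluation. To form \(\mathcal{L}(H,x)\) we first generate all candidate points \(\bigcup_{k=0}^{x}\mathcal{L}(H_1,k)\odot\mathcal{L}(H_2,x-k)\) (respectively with \(\oplus\)). There are at most \(x+1\le B+1\) summands, each a set of at most \(|\mathcal{L}(H_1,k)|\cdot|\mathcal{L}(H_2,x-k)|=O(n^2L_{\max}^2)\) pairs, each pair produced in constant time; this yields \(N=O(Bn^2L_{\max}^2)\) candidates. The remaining task is the \(\max\)-operation, i.e.\ discarding dominated candidates. Sorting the \(N\) candidates and scanning once extracts the non-dominated subset in \(O(N\log N)\) time, and since \(\log N=O(\log(Bn^2L_{\max}^2))=O(\log(BnL_{\max}))\), one evaluation costs \(O(Bn^2L_{\max}^2\log(BnL_{\max}))\), which dominates the generation cost.

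Finally I would multiply the three contributions. Summing the per-evaluation cost over all \(B+1\) budget levels gives \(O(B^2n^2L_{\max}^2\log(BnL_{\max}))\) per node, and summing over the \(O(m)\) nodes of \(T_G\) yields the claimed bound \(\mathcal{O}(mB^2n^2L_{\max}^2\log(BnL_{\max}))\). The only subtlety worth checking is that the base case \eqref{eq:1}, together with the bookkeeping of the \(\infty\)-labels, does not dominate this cost; this is clear, since each leaf is handled in \(O(B)\) time.
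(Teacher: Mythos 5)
Your proposal is correct and follows essentially the same argument as the paper: bound the tree size by \(O(m)\), bound \(|\mathcal{L}(H,x)|\) by \(O(nL_{\max})\) via the range of possible path lengths, count \(O(B^2n^2L_{\max}^2)\) generated candidates per internal node, and pay an extra \(\log(BnL_{\max})\) factor for the dominance filtering. The only cosmetic difference is that you filter by sorting and scanning where the paper cites the maxima-finding algorithm of Kung, Luccio and Preparata, and your explicit distinct-first-coordinate argument for the \(O(nL_{\max})\) label bound is a slightly more careful version of the paper's statement; both yield identical bounds.
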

	\begin{proof}
		The decomposition tree \(T_G\) has \(m\) leaf vertices. Since it is a binary tree, we know that \(T_G\) has \(2m-1\) vertices. Each leaf requires constant time for determining one label set. For each leaf we construct \(B+1\) labels. 
		Let \(H\) be a subgraph of \(G\) corresponding to one of the \(m-1\) non-leaves in \(T_G\). 
		We call $H_1$ and $H_2$ the subgraphs of $H$ corresponding to the children of \(H\) in the decomposition tree. 
		Every path in \(G\), independent of the interdiction strategy, can be of length between \(0\) and \((n-1)L_{\max}\) or \(\infty\) for both players. 
		Thus, the number of non-dominated labels in \(\mathcal{L}(H,x)\) is in \(\mathcal{O}(nL_{\max})\) for all \(x\in\{0,1,\ldots,B\}\). For each of the \(m-1\) non-leaves we require two steps. First, we create \(\sum_{x=0}^{B}\sum_{k=0}^{x}|\mathcal{L}(H_1,k)|\cdot|\mathcal{L}(H_2,x-k)|\in\mathcal{O}(B^2n^2L_{\max}^2)\) labels. Second, we have to check the labels for non-dominance, which can be done in \(\mathcal{O}(B^2n^2L_{\max}^2\log(BnL_{\max}))\), cf. \cite{kung1975finding}. Executing these operations at most \(m-1\) times, yields an overall running time complexity of \(\mathcal{O}(mB^2n^2L_{\max}^2\log(BnL_{\max}))\), which is pseudopolynomial in the size of the input.
	\end{proof}

	\section{Conclusion}\label{sec:conclusion}
	In this article, we provided a biobjective extension of the shortest path network interdiction problem resulting in a game composed of three players, i.e., two shortest path players and one interdictor.
	The discussed decision version of the two player shortest path network interdiction problem was shown to be \(\mathcal{NP}\)-complete, even for two-terminal series-parallel graphs. Further, we provided an instance with an exponential number of non-dominated points proving the problem to be intractable, even for unit interdiction costs and on two-terminal series-parallel graphs. We provided a dynamic programming algorithm solving the problem on two-terminal series-parallel graphs in pseudopolynomial time.
	
	The two player shortest path network interdiction problem is open to a wide range of further research, including potential approximation and/or pseudo-approximation approaches. The development of algorithms for more general graph classes provides further interesting fields of study. 
	
	\section*{Acknowledgments}
	This work was partially supported by the Bundesministerium f\"ur Bildung und Forschung (BMBF) under Grant No. 13N14561 and Deutscher Akademischer Austauschdienst (DAAD) under Grant No. 57518713.
	
%	\section*{References}
%	
%	\bibliography{mybibfile}

\end{document}